\journal{Ecological Modelling} 
\def\tableline{ \vskip .1in \hrule height .6pt \vskip 0.1in}
\newcommand{\ID}{I/D}
\def\citeapos#1{\citeauthor{#1}'s (\citeyear{#1})}
\theoremstyle{plain}
\newtheorem{theorem}{Theorem}[section]
\newtheorem{proposition}[theorem]{Proposition}
\newtheorem{lemma}[theorem]{Lemma}
\theoremstyle{definition}
\begin{document} 
\begin{frontmatter}
  \title{Equivalence of the realized input and output oriented
    indirect effects metrics in ecological network analysis}

\author[uncwbio,cms]{S.R.~Borrett\corref{cor1}}
\ead{borretts@uncw.edu}

\author[uncwmath]{M.A.~Freeze}
\ead{freezem@uncw.edu}

\author[uncwbio,cms]{A.K.~Salas}
\ead{aks2515@gmail.com}

\address[uncwbio]{Department of Biology \& Marine Biology, University
  of North Carolina Wilmington, 601 S.\ College Rd., Wilmington, 28403
  NC, USA}
\address[cms]{Center for Marine Science, University of North Carolina Wilmington}
\address[uncwmath]{Department of Math \& Statistics, University
  of North Carolina Wilmington, 601 S.\ College Rd., Wilmington, NC
  28403, USA}

\cortext[cor1]{Corresponding author. Tel. 910.962.2411; fax: 910.962.4066}

\begin{abstract}
  A new understanding of the consequences of how ecosystem elements
  are interconnected is emerging from the development and application
  of Ecological Network Analysis.  The relative importance of indirect
  effects is central to this understanding, and the ratio of indirect
  flow to direct flow (\ID) is one indicator of their importance.  Two
  methods have been proposed for calculating this indicator.  The unit
  approach shows what would happen if each system member had a unit
  input or output, while the realized technique determines the ratio
  using the observed system inputs or outputs.  When using the unit
  method, the input oriented and output oriented ratios can be
  different, potentially leading to conflicting results.  However, we
  show that the input and output oriented \ID\ ratios are identical
  using the realized method when the system is at steady state.  This
  work is a step in the maturation of Ecological Network Analysis that
  will let it be more readily testable empirically and ultimately more
  useful for environmental assessment and management.
\end{abstract}

\begin{keyword}
  environs \sep network environ analysis \sep environment \sep
  indirect effects \sep input--output analysis \sep connectivity \sep
  food web \sep trophic dynamics
\end{keyword}

\end{frontmatter}


\section{Introduction}
The dominance of indirect effects hypothesis is central to our new
understanding of ecosystem ecology revealed through systems analyses
like Ecological Network Analysis (ENA) \citep{ulanowicz86,
  fath99_review, dunne02food, belgrano05, bascompte07,
  jorgensen07_newecology, mcrae08circuit, christian2009ecological,
  baird2011}.  While empirical results are also highlighting the
important roles of indirect effects \citep[e.g.][]{wootton94_review,
  berger08, diekotter07, menendez07, letnic2009keystone, walsh10},
network analysis shows how the integral of indirect interactions
across a whole web of interactions can transform the effective
relationships between species so that they tend to be more positive
\citep{ulanowicz90, patten91, fath98, bondavalli99}, and more evenly
distribute the system resources \citep{fath99_homo, borrett10_hmg} and
control \citep{schramski06, schramski07}.

When applying ENA, ecologists use the ratio of indirect to direct
flows (\ID) to indicate the importance of indirect effects
\citep{higashi89, fath99_review, jorgensen07_newecology}.  For
example, \citet{borrett06_neuse} used \ID\ to show the temporal
consistency of the organization of the Neuse River Estuary ecosystem,
and \citet{baird09_agg} found that indirect effects were dominant in
the Sylt-R{\o}mo Bight ecosystem regardless of the model aggregation
scheme.  \citet{borrett11_ree} distinguished between two ways of
calculating this ratio that they termed the \emph{unit} and
\emph{realized} formulations.  The unit calculation is the traditional
approach that is performed after a classic systems analysis technique
is applied.  In this case, each species is assumed to receive a single
unit of boundary flux, and then the direct and indirect flow
intensities in the system are summed.  This unit calculation shows
what the relative indirect flows would be if every node received the
same input.  It is useful because it facilitates within system
comparisons \citep{whipple07} and due to its close relationship to the
eigenvalues and vectors of underlying matrices \citep{borrett10_idd}.
Alternatively, the realized calculation uses the observed boundary
flow vector instead of an input vector of ones.  This approach was
first introduced by \citet{borrett06_neuse} but was clarified and
formalized in \citet{borrett11_ree}.  Its analytical advantage is that
the system flows are scaled with the observed, typically
non-homogeneous boundary fluxes.  

\citet{borrett11_ree} showed that the quantitative and qualitative
results could be changed by this shift in perspective, highlighting
the importance of the system's connection to its broader environment.
For example, the wet season model of carbon flux in the Everglades
graminoid marshes \citep{ulanowicz00_graminoids, heymans02} exhibits
the dominance of indirect effects when analyzed with the realized
method (\ID\ = 1.2) but does not when analyzed with the unit approach
(\ID = 0.81).  In this example, both the numerical value of the
indicator and its interpretation changed.

An important aspect of ENA throughflow analysis is that it has two
orientations: input and output.  Conceptually, we can either pull the
energy--matter out of the system and trace its origin back through the
system to the inputs, or we can push the inputs into the system and
then trace where they travel through the system until they exit the
system.  The first type of analysis is termed the input analysis
because in \citeapos{leontief66} original application he was
determining what raw materials were required as inputs into the
economic system to generate a single output, like a car. The second
analysis is forward looking in direction, but it is termed the output
analysis because we are following the inputs to their output from the
system.  In this case, it is the output generated that is the focus.
Patten exploited this bidirectional analytical feature in the
development of his environ concept \citep{patten78,
  patten81_superniche, patten82} and the subsequent environmental
theory.  While some ENA components leverage the differences between
the input and output orientation \citep[e.g.][]{patten81,
  ulanowicz90,fath98,patten91, schramski06, schramski07}, the
alternative perspectives can generate conflicting and confusing
results.

In this short communication we show that while the input and output
oriented unit \ID\ can differ in quantitative and qualitative nature,
the realized input and output oriented \ID\ are always identical.  We
first show this result numerically for a set of network ecosystem
models drawn from the literature.  We then provide a mathematical
proof for why this relationship will generally hold when the models
are at steady state.  We conclude this paper by arguing that this
identity bolsters the utility of the realized metric for (1) testing
the generality of network hypotheses like the dominance of indirect
effects \citep{salas11_did}, (2) applying the ecological insights of
this type of ENA to the original system, and (3) comparing ecosystems'
organization \citep{baird91, baird09_agg, borrett06_neuse,
  borrett07_lanier, whipple07}.

\section{Ecological Network Analysis}
ENA is well described in the literature \citep[e.g.,][]{patten76,
  ulanowicz86, ulanowicz04, fath99_review, fath06, schramski11}, but
we recount the input and output throughflow analyses here as this is
essential for our discussion.  \citet{borrett11_ree} focused on the
distinction between the idealized and realized forms of the \ID\
ratio.  Here we focus on contrasting the input and output oriented
analyses.

\subsection{Model Input}
Ecologists apply ENA to network models of energy--matter storage and
flux in ecosystems.  These models are alternatively referred to as
compartment models or energy--matter budgets.  In the general form of
these models, $n$ nodes represent species, groups of species, or
abiotic components, and the $L$ weighted directed edges represent the
flow of energy--matter generated by some ecological process (e.g.,
consumption, excretion, harvesting).  Let $\mathbf{F}_{n\times
  n}=(f_{ij})$ represent the observed flow from ecosystem compartment
$j$ to compartment $i$ (e.g., $j \rightarrow i$), $\vec{z}_{n\times
  1}$ be a vector of node inputs originating from outside the system,
and $\vec{y}_{1\times n}$ be a vector of flows from each node that
exit the system. Sometimes the outputs are subdivided into those
that are thermodynamically degraded and those that reflect high
quality energy--matter exported from the system
\citep[e.g.,][]{ulanowicz86}.  For the analysis presented here, we
lump these losses together because once they cross the system boundary
our analysis is blind to their fate, making this distinction less
critical.

ENA throughflow analysis assumes that the model traces a single
conserved currency (e.g., energy, nitrogen, phosphorus), and that the
system is at a steady state (inputs equal outputs).  The steady state
assumption is crucial for the throughflow path decomposition
\citep{borrett10_idd}.

For this research, we examined 50 ecosystem network models that
represent 35 distinct primarily freshwater and marine ecosystems
(Table~\ref{tab:models}).  The models exhibit a range of sizes ($4 \le
n \le 125$), connectance (number of direct links divided by the total
possible number of links; $0.03 \le (C=L/n^2) \le 0.40$), and
recycling (Finn Cycling Index \citep{finn80}, which is the proportion
of total system throughflow derived from recycled flux; $0 \le FCI \le
0.51$).  The core of these trophically-based networks is a food web,
but other ecological processes are included and model compartments may
represent non-living resource pools such as particulate organic
carbon.  We applied \citeapos{allesina03} AVG2 balancing algorithm to
15 models that were not initially at steady state because this
technique tends to generate the least distortion of network
properties. This is the same set of models used in several recent
network studies \citep{borrett10_hmg, borrett11_ree, salas11_did}, and
the model data are available from \href{http://people.uncw.edu/borretts/research.html}{http://people.uncw.edu/borretts/research.html}.

\subsection{Throughflow Analysis}
Input and output oriented throughflow analysis have three main steps.
First, we determine the throughflow vector $\vec{T}$, which is the
total amount of energy--matter flowing into or out of each node.  This
can be calculated from the initial model information as follows:
\begin{linenomath}
\begin{align}
T_i^{\textrm{in}}&\equiv \sum_{j=1}^nf_{ij} + z_i \quad (i = 1, 2, \ldots,
n) \textrm{, and}\\
T_j^{\textrm{out}}&\equiv \sum_{i=1}^nf_{ij} + y_j \quad (j = 1, 2, \ldots,
n).
\end{align}
\end{linenomath}
At steady state, $T_i^{\textrm{ in}} = (T_j^{\textrm{
    out}})^{\mathrm{T}} = \vec{T}_{n\times 1} = (T_j)$.  From this
vector, we derive the first whole-system indicator, total system
throughflow ($TST=\sum_{j=1}^nT_j$). $TST$ indicates the total magnitude
of flow activity. 

Second, we calculate the input $\mathbf{G'}_{n\times n}=(g'_{ij})$ and
output $\mathbf{G}_{n\times n}=(g_{ij})$ \emph{direct flow
  intensities} from node $j$ to $i$.  These are defined as
\begin{linenomath}
\begin{align} 
g'_{ij}&\equiv  f_{ij}/T^{\textrm{in}}_i \textrm{, and} \label{eq:gij}\\
g_{ij}&\equiv   f_{ij}/T^{\textrm{out}}_j. \label{eq:gpij}
\end{align}
\end{linenomath}
Here, $g'_{ij}$ is the fraction of input at receiver node $i$
contributed from the donor node $j$, while $g_{ij}$ is the fraction of
output throughflow at donor node $j$ contributed to node $i$.  The
subtle distinction between the input and output oriented direct flow
intensities in equations (\ref{eq:gij} and \ref{eq:gpij}) is whether
the observed flow is normalized by the throughflow of the receiving
node (input case) or the donating node (output case).  The $g'_{ij}$
and $g_{ij}$ values are dimensionless and the column sums of the
matrices must lie between 0 and 1 because of thermodynamic constraints
of the original model \citep[see][for details]{borrett10_idd}.

Third, we determine the \emph{integral flow intensities}
$\mathbf{N'}=(n_{ij})$ (input) and $\mathbf{N}=(n_{ij})$ (output) as
\begin{linenomath}
\begin{align} 
  \mathbf{N'} &\equiv \sum_{m=0}^\infty \mathbf{G'}^m =
  \underbrace{\mathbf{I}}_{\textrm{Boundary}} +
  \underbrace{\mathbf{G'}^1}_{\textrm{Direct}} +
  \underbrace{\mathbf{G'}^2 + \ldots + \mathbf{G'}^m +
    \ldots}_{\textrm{Indirect}}\textrm{,} \label{eq:Ga} \\
\textrm{and} \nonumber\\  
  \mathbf{N} &\equiv \sum_{m=0}^\infty \mathbf{G}^m = 
  \underbrace{\mathbf{I}}_{\textrm{Boundary}} +
  \underbrace{\mathbf{G}^1}_{\textrm{Direct}} + \underbrace{\mathbf{G}^2
    + \ldots + \mathbf{G}^m +
    \ldots}_{\textrm{Indirect}}.  \label{eq:Gb} 
\end{align}
\end{linenomath}
In equations (\ref{eq:Ga} and \ref{eq:Gb}), $\mathbf{I}=(i_{ij})$ is
the matrix multiplicative identity and the elements of $\mathbf{G'}^m$
and $\mathbf{G}^m$ are the fractions of boundary flow that travels
from node $j$ to $i$ over all pathways of length $m$.  The exact
values of $\mathbf{N'}$ and $\mathbf{N}$ can be found using the
following identities because given our model definitions the power
series must converge.  Thus,
\begin{linenomath}
\begin{align} \mathbf{N'} &= (\mathbf{I}-\mathbf{G'})^{-1} \textrm{, and} \\
\mathbf{N} &= (\mathbf{I}-\mathbf{G})^{-1}. 
\end{align}
\end{linenomath}
The $n'_{ij}$ and $n_{ij}$ elements represent the intensity of
boundary flow that passes from $j$ to $i$ over all pathways of all
lengths.  These values integrate the boundary, direct, and indirect
flows.  

The differences between $\mathbf{N'}$ and $\mathbf{N}$ results
from the input and output perspectives.  However, we can use either of
these integral flow matrices to recover $T$ as follows:
\begin{linenomath}
\begin{align}
\vec{T}&=\vec{y}\mathbf{N'}\textrm{, } \label{eq:TNP}\\ 
\vec{T}&=\mathbf{N}\vec{z}.           \label{eq:TN}
\end{align}
\end{linenomath}

\subsection{Unit and Realized Indirect Effects}

There are two methods of calculating \ID\ used in ENA to quantify the
importance of indirect flows.  We refer to these as the \emph{unit}
and \emph{realized} methods.  The unit method assumes that each node
receives a single unit of input, which we will represent as a vector
of ones with length equal to the number of nodes $[1]_{n \times 1}$ or
output $[1]_{1 \times n}$.
The unit input and output \ID\ metrics are calculated as follows.
\begin{linenomath}
\begin{align}
  \ID_{\textrm{unit, input}} &= \frac{\sum_{i=1}^n\bigl(\vec{[1]}_{1\times
      n}\left(\mathbf{N'}-\mathbf{I}-\mathbf{\mathbf{G}'}\right)\bigr)_i}{\sum_{i=1}^n\bigl(\vec{[1]}_{1\times
      n}\mathbf{\mathbf{G}'}\bigr)_i} \\
  \ID_{\textrm{unit, output}} &= \frac{\sum_{i=1}^n\bigl(\left(\mathbf{N}-\mathbf{I}-\mathbf{G}\right)\vec{[1]}_{n\times 1}\bigr)_i}{\sum_{i=1}^n\bigl(\mathbf{G}\vec{[1]}_{n\times 1}\bigr)_i}
\end{align}
\end{linenomath}

In contrast, the realized method uses the observed boundary vectors
$\vec{z}$ or $\vec{y}$.  
\begin{linenomath}
\begin{align}
\ID_{\textrm{realized, input}}&= \frac{\sum_{i=1}^n\bigl(\vec{y}(\mathbf{N'}-\mathbf{I}-\mathbf{G'})\bigr)_i}{\sum_{i=1}^n\bigl(\vec{y}\mathbf{G'}\bigr)_i}
\\
\ID_{\textrm{realized, output}}&= \frac{\sum_{i=1}^n\bigl((\mathbf{N}-\mathbf{I}-\mathbf{G})\vec{z}\bigr)_i}{\sum_{i=1}^n\bigl(\mathbf{G}\vec{z}\bigr)_i}
\end{align}
\end{linenomath}

As \citet{borrett11_ree} discuss, the magnitude of the boundary flow
vectors, be it a unit vector $\vec{[1]}_{n \times 1}$ or the observed
boundary vectors $\vec{z}$ or $\vec{y}$, cancels in the ratio measure.
However, the different distribution of boundary flows differentially
excites the flow intensities in the integral flow matricies.  It is
the distribution of the boundary inputs and outputs that is key.

\section{Results}
We present our results in two parts.  The first section shows the
numerical results of applying ENA and the alternative formulations of
\ID\ to the 50 network models.  The second section summarizes the
analytical work that generalizes this result.

\subsection{Numerical Results}
Figure~\ref{fig:ide} compares the input and output oriented \ID\
values of the network models.  Panel (a) illustrates the variation
that can occur between the input and output oriented \ID\ values when
using the idealized formulation.  The adjusted $r^2$ value for a
linear regression fit to this data set is $0.82$, and the Pearson's
correlation coefficient is $0.91$.  To ensure our balancing routine
did not heavily influence our results, we also calculated the
Pearson's correlation coefficient for only the 35 models not balanced
and found it to be $0.92$.
Panel (b) shows the one to one correspondence between the input and
output \ID\ values when we use the realized formulation.  These values
are identical and have a Pearson's correlation coefficient of 1.  
 

\subsection{Analytical Results}

The steady state assumption actually implies very strong compatiblity
conditions on the powers of $\mathbf{G}$ and $\mathbf{G'}$, namely
that for each nonnegative integer $m$, the sizes of $\mathbf{G}^{m}\vec{z}$ and
$\vec{y}\mathbf{G}^{' m}$ must coincide in the sense that we have the
equality
\begin{linenomath}
\begin{align}
\sum_{i=1}^n (\mathbf{G}^{m}\vec{z})_{i} &= \sum_{i=1}^n (\vec{y}\mathbf{G}^{' m})_{i}, \label{equalterms}
\end{align}
\end{linenomath}
as demonstrated in the proof of Lemma \ref{lemmatwo} in Appendix A.

The terms of the numerator and denominator of realized \ID\ each are
of the form found on the left hand side in equation~(\ref{equalterms})
for realized output \ID\ or of the form found on the right hand side
in equation~(\ref{equalterms}) for realized input \ID\ .  It follows
that at steady state, the realized input \ID\ and output \ID\ must be
identical.  This mathematical generalization of the numerically
observed results is recorded in Theorem \ref{thetheorem} of Appendix
A.


\section{Discussion}
In this note we show that using the realized formulation the input and
output oriented \ID\ values are identical when the system is at steady
state.  Like \citet{bata07}, this work provides a conceptual and
mathematical condensation for Ecological Network Analysis.  This is
part of a methodological maturation process in which investigators are
finding relationships amongst the alternative analyses (input vs.\
output or throughflow vs.\ storage) and reconsidering initial
assumptions that have become canalized \citep[e.g.][]{schramski11,
  matamba09}.  Here, we are not fully pruning the analytical options,
but clarifying the consequences of alternative methods.  When an
investigator selects to use the realized method as described by
\citet{borrett11_ree}, one advantage is that the input and output
values are identical.

The realized formulation of \ID\ functions as one description of the
internal organization of the ecosystem.  It is intuitively satisfying
that our understanding of this organization does not change because
our point of view switches (input or output).  Further, as the
realized formulation considers both the internal system organization
and the external environmental connection, it characterizes the
complete system as it was initially described by the model builders.
Thus, the realized metric is most appropriate for testing the
generality of systems ecology hypotheses like the dominance of
indirect effects \citep{salas11_did}, applying ENA insights back to
the original system \citep{zhang10, zhang10_ecomod}, and comparing
ecosystem organization \citep{baird91, whipple07, ray08}.

A limitation of this work is that the realized input and output \ID\
should only be identical when the observed system is at steady state,
as shown in the mathematical proof.  This means systems not at steady
state will likely have different relative contributions of indirect
flows when observed forward through the network instead of backwards
through the network.  This is evident when we recall equations
(\ref{eq:TNP}) and (\ref{eq:TN}) and that $T^{in} = T^{out}$ at steady
state.  Without this identity (i.e., not at steady state), there is no
reason to expect the mathematical decomposition of the input and
output throughflow into boundary, direct, and indirect flows to be
identical.  We suspect that network particle tracking as implemented
in EcoNet \citep{kazanci07, tollner09} will be useful for further
exploring the consequences of these input and output perspective
differences.

This work is one more essential step in the maturation of the
ecosystem and environmental theory developing in association with
Ecosystem Network Analysis \citep[e.g.][]{abarca02, allesina04_cycles,
  borrett10_idd, kaufman10, schramski11}.  Multiple lines of
investigation are building confidence in ENA results including the
initial empirical validation by \citet{dame08_validation} and the
agreement between the Eulerian approach used here and the Lagrangian
approach of network particle tracking \citep{matamba09}, but continued
rigorous empirical testing of network ecology hypotheses is essential.
\citet{loehle87_hypothesis} points out that this theory maturation is
essential so that theory predictions clarify to the point that robust
empirical test be successful.

\section{Acknowledgments}
This paper benefited from discussions with B.C. Patten and the careful
reviews J. R. Schramski, A. Stapleton, and two anonymous reviewers.
This work was supported in part by UNCW and NSF (DEB-1020944).  AKS
was supported by the James F.\ Merritt fellowship from the UNCW Center
for the Marine Science.

\section{Appendix A}

The steady state assumption that $\vec{T}^{\textrm{in}} =
(\vec{T}^{\textrm{out}})^{\mathrm{T}}$ means that
$T_{\ell}^{\textrm{in}} = T_{\ell}^{\textrm{out}}$ for all indices
$\ell$.  We have the following proposition as an immediate
consequence.


\begin{proposition}\label{propm01}
  When the observed system is at steady state, we have that 
  $\sum_{i} (\mathbf{G}\vec{z})_{i} = \sum_{i} (\vec{y}\mathbf{G}')_{i}.$
\end{proposition}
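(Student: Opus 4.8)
The plan is to unpack both sides into explicit double sums over the observed flows $f_{ij}$, collapse the inner contraction using the throughflow definitions $T_i^{\textrm{in}} = \sum_{j} f_{ij} + z_i$ and $T_j^{\textrm{out}} = \sum_{i} f_{ij} + y_j$, and then compare the two reduced expressions. The steady-state hypothesis enters only in the form $T_\ell^{\textrm{in}} = T_\ell^{\textrm{out}}$ for every $\ell$; I write $T_\ell$ for this common value.

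First I would expand the left-hand side. Since $g_{ij} = f_{ij}/T_j^{\textrm{out}}$, we have $(\mathbf{G}\vec{z})_i = \sum_{j} f_{ij} z_j / T_j^{\textrm{out}} = \sum_{j} f_{ij} z_j / T_j$, and summing over $i$ and interchanging the two finite sums gives $\sum_{i}(\mathbf{G}\vec{z})_i = \sum_{j} (z_j/T_j)\sum_{i} f_{ij}$. The inner sum $\sum_{i} f_{ij}$ is the total flow leaving node $j$ to the other compartments, which by the definition of $T_j^{\textrm{out}}$ equals $T_j^{\textrm{out}} - y_j = T_j - y_j$. Hence $\sum_{i}(\mathbf{G}\vec{z})_i = \sum_{j} z_j - \sum_{j} z_j y_j / T_j$.

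Next I would run the mirror computation on the right-hand side. Since $g'_{ij} = f_{ij}/T_i^{\textrm{in}}$, the row-vector entry is $(\vec{y}\mathbf{G}')_i = \sum_{j} y_j f_{ji}/T_j^{\textrm{in}} = \sum_{j} y_j f_{ji}/T_j$, and after summing over $i$ and interchanging sums we obtain $\sum_{i}(\vec{y}\mathbf{G}')_i = \sum_{j} (y_j/T_j)\sum_{i} f_{ji}$. Now the inner sum $\sum_{i} f_{ji}$ is the total flow entering node $j$ from the other compartments, which by the definition of $T_j^{\textrm{in}}$ equals $T_j^{\textrm{in}} - z_j = T_j - z_j$; therefore $\sum_{i}(\vec{y}\mathbf{G}')_i = \sum_{j} y_j - \sum_{j} y_j z_j / T_j$. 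Comparing the two results, the subtracted cross terms agree term by term (multiplication commutes), so the proposition reduces to the identity $\sum_{j} z_j = \sum_{j} y_j$ --- total system input equals total system output. This is built into the steady-state assumption, but it can also be recovered by summing $T_j^{\textrm{in}} = T_j^{\textrm{out}}$ over $j$ and cancelling the two double sums of internal flows, which are equal since each is $\sum_{i,j} f_{ij}$ up to relabeling of the summation indices. With that identity the two sides coincide.

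I do not expect a genuine obstacle here: once the definitions are substituted, everything is elementary finite-sum manipulation. The one point needing care is the asymmetry in how the two direct-flow-intensity matrices are normalized --- $g_{ij}$ divides by the \emph{donor} throughflow $T_j^{\textrm{out}}$ while $g'_{ij}$ divides by the \emph{receiver} throughflow $T_i^{\textrm{in}}$ --- so that on the left the surviving inner sum runs over the receivers from $j$ (the internal outflow of $j$) while on the right it runs over the donors to $j$ (the internal inflow of $j$). It is exactly this swap, together with the input--output balance, that produces the equality; a minor secondary point is the harmless convention that any node with $T_\ell = 0$ also has $z_\ell = y_\ell = 0$ and all incident flows zero, so the corresponding terms simply drop out of every sum.
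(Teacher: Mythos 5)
Your proof is correct and follows essentially the same route as the paper's: both expand the two quantities into double sums over the $f_{ij}$, invoke the throughflow definitions $\sum_i f_{ij}=T_j^{\textrm{out}}-y_j$ and $\sum_i f_{ji}=T_j^{\textrm{in}}-z_j$ together with $T^{\textrm{in}}_\ell=T^{\textrm{out}}_\ell$, and close with the cancellation that amounts to $\sum_j z_j=\sum_j y_j$. The only difference is organizational --- you reduce both sides symmetrically to $\sum_j z_j-\sum_j z_jy_j/T_j$ and $\sum_j y_j-\sum_j y_jz_j/T_j$, whereas the paper transforms the left side into the right in one chain --- and your explicit remark about nodes with $T_\ell=0$ is a small point the paper leaves implicit.
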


\begin{proof}



Note that
\begin{align*}
\sum_{i} (\mathbf{G}\vec{z})_{i} &= \sum_{i} \left[ \sum_{j} \left( z_{j} \frac{ f_{ij} }{T_{j}^{\textrm{out}}} \right) \right]\\
&= \sum_{i,j} \left( z_{j} \frac{ f_{ij} }{T_{j}^{\textrm{in}}} \right) \\
&= \sum_{i,j} \left[  \left( T_{j}^{\textrm{in}} - \sum_{k} (f_{jk}) \right) \frac{ f_{ij} }{T_{j}^{\textrm{in}}} \right] (\textrm{by equation (1)}) \\
&= \sum_{i,j} (f_{ij}) - \sum_{i,j,k} \left( \frac{ f_{ij} f_{jk} }{T_{j}^{\textrm{in}}} \right) \\
&= \sum_{i,j} (f_{ij}) - \sum_{j,k} \left[ \frac{ f_{jk} }{ T_{j}^{\textrm{in}} } \sum_{i} ( f_{ij} ) \right] \\
&= \sum_{i,j} (f_{ij}) - \sum_{j,k} \left[ \frac{ f_{jk} }{ T_{j}^{\textrm{in}} } \left( T_{j}^{\textrm{out}} - y_{j} \right) \right] (\textrm{by equation (2)}) \\
&= \sum_{i,j} (f_{ij}) - \sum_{j,k} (f_{jk}) + \sum_{j,k} \left( \frac{ f_{jk} }{ T_{j}^{\textrm{out}} } y_{j} \right) \\
&=  \sum_{j,k} \left( \frac{ f_{jk} }{ T_{j}^{\textrm{out}} } y_{j} \right) \\
&= \sum_{k} \left[ \sum_{j}  \left( y_{j} \frac{ f_{jk} }{ T_{j}^{\textrm{out}} }  \right) \right] \\
&= \sum_{k} (\vec{y}\mathbf{G}')_{k}.
\end{align*}

\end{proof}

We extend the observation of Proposition \ref{propm01}.

\begin{lemma}\label{lemmaone}
For each integer $m \ge 2$, we have that
$$(\mathbf{G}^{m}\vec{z})_i = \sum_{\ell_{1},\dots,\ell_{m}} \left[ z_{\ell_{m}} \cdot \frac{ f_{i \ell_{1}} }{T_{\ell_{1}}^{\textrm{out}}} \cdot \prod_{n=1}^{m-1} \left( \frac{ f_{\ell_{n} \ell_{n+1}}}{ T_{\ell_{n+1}}^{\textrm{out}}} \right) \right]$$
and
$$(\vec{y}\mathbf{G}^{' m})_i = \sum_{\ell_{1},\dots,\ell_{m}} \left[ y_{\ell_{m}} \cdot \frac{ f_{\ell_{1} i} }{ T_{\ell_{1}}^{\textrm{in}}} \cdot \prod_{n=1}^{m-1} \left( \frac{ f_{\ell_{n+1} \ell_{n}}}{ T_{\ell_{n+1}}^{\textrm{in}}} \right) \right].$$
\end{lemma}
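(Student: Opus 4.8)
The plan is to obtain both formulas by unrolling the definition of a matrix power applied to a vector and then substituting the definitions of the direct flow intensities. No steady-state hypothesis is needed here, so this is a purely algebraic rewriting; the compatibility between $\vec{T}^{\textrm{in}}$ and $\vec{T}^{\textrm{out}}$ only enters in Lemma~\ref{lemmatwo}, which consumes this one.

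First I would record the standard expansion of $\mathbf{G}^m$ acting on $\vec{z}$ as a sum over walks of length $m$,
\[
(\mathbf{G}^{m}\vec{z})_i \;=\; \sum_{\ell_1,\dots,\ell_m} g_{i\ell_1}\,g_{\ell_1\ell_2}\cdots g_{\ell_{m-1}\ell_m}\,z_{\ell_m},
\]
which is just iterating $(\mathbf{G}\vec{v})_i=\sum_j g_{ij}v_j$ a total of $m$ times (equivalently, a one-line induction on $m$, the step being to prepend one factor $g_{ij}$ with a fresh summation index and shift the remaining labels up by one). Substituting $g_{ab}=f_{ab}/T_b^{\textrm{out}}$ from equation~(\ref{eq:gpij}) turns the summand into $z_{\ell_m}\cdot(f_{i\ell_1}/T_{\ell_1}^{\textrm{out}})\cdot(f_{\ell_1\ell_2}/T_{\ell_2}^{\textrm{out}})\cdots(f_{\ell_{m-1}\ell_m}/T_{\ell_m}^{\textrm{out}})$; separating off the first ratio leaves exactly $\prod_{n=1}^{m-1} f_{\ell_n\ell_{n+1}}/T_{\ell_{n+1}}^{\textrm{out}}$, which is the claimed first identity. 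For the second identity the only wrinkle is that $\vec{y}$ multiplies $\mathbf{G}'$ from the left, so a walk of length $m$ is read off in the opposite order; I would write
\[
(\vec{y}\,\mathbf{G}^{' m})_i \;=\; \sum_{\ell_1,\dots,\ell_m} y_{\ell_m}\,g'_{\ell_m\ell_{m-1}}\cdots g'_{\ell_2\ell_1}\,g'_{\ell_1 i},
\]
choosing the labels so that $\ell_m$ abuts $\vec{y}$ and $\ell_1$ abuts the free index $i$, then substitute $g'_{ab}=f_{ab}/T_a^{\textrm{in}}$ from equation~(\ref{eq:gij}) and separate off the factor $f_{\ell_1 i}/T_{\ell_1}^{\textrm{in}}$; what remains is $\prod_{n=1}^{m-1} f_{\ell_{n+1}\ell_n}/T_{\ell_{n+1}}^{\textrm{in}}$, matching the statement.

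The substantive work is purely bookkeeping, so the main obstacle is just keeping the indexing consistent: fixing once and for all which walk endpoint plays the role of $\ell_1$ and which plays $\ell_m$ on each side, keeping each throughflow subscript attached to the correct node (the \emph{receiving} node for $\mathbf{G}'$ and the \emph{donating} node for $\mathbf{G}$, per equations~(\ref{eq:gij}) and (\ref{eq:gpij})), and verifying that the reindexing performed when the first factor is separated off genuinely reproduces the product over $n=1,\dots,m-1$. If a fully inductive write-up is preferred, the base case $m=2$ is the short computation $(\mathbf{G}^2\vec{z})_i=\sum_{\ell_1}g_{i\ell_1}\sum_{\ell_2}g_{\ell_1\ell_2}z_{\ell_2}$ (together with its row-vector analogue for $\vec{y}\,\mathbf{G}^{'2}$), and the inductive step is exactly the move of prepending (respectively appending) one more factor with a fresh index and relabeling the rest.
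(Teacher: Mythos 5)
Your proposal is correct and matches the paper's own argument: the paper likewise unrolls the matrix power (computing the $m=2$ case explicitly from $(\mathbf{G}\vec{z})_i=\sum_j z_j f_{ij}/T_j^{\textrm{out}}$ and then iterating), substitutes the definitions $g_{ij}=f_{ij}/T_j^{\textrm{out}}$ and $g'_{ij}=f_{ij}/T_i^{\textrm{in}}$, and treats the row-vector case ``similarly.'' Your index conventions ($\ell_m$ adjacent to the boundary vector, $\ell_1$ adjacent to the free index $i$) and the resulting products agree exactly with the stated formulas, so there is nothing to add.
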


\begin{proof}
We observed in the proof of Proposition \ref{propm01} that
$$(\mathbf{G}\vec{z})_{i} = \sum_{j} \left( z_{j} \frac{ f_{ij} }{T_{j}^{\textrm{out}}} \right).$$
It follows that
\begin{align*}
(\mathbf{G}^{2}\vec{z})_{i} &= (\mathbf{G}(\mathbf{G}\vec{z}))_{i} \\
&= \sum_{j} \left( (\mathbf{G}\vec{z})_{j} \frac{ f_{ij} }{T_{j}^{\textrm{out}}} \right) \\
&= \sum_{j} \left[ \left( \sum_{k} z_{k} \frac{ f_{jk} }{T_{k}^{\textrm{out}}} \right)  \frac{ f_{ij} }{T_{j}^{\textrm{out}}}  \right] \\
&= \sum_{j,k} \left( z_{k} \frac{ f_{ij} }{T_{j}^{\textrm{out}}} \frac{ f_{jk} }{T_{k}^{\textrm{out}}} \right) \\
&= \sum_{\ell_{1},\ell_{2}} \left( z_{\ell_{2}} \frac{ f_{i \ell_{1}} }{T_{\ell_{1}}^{\textrm{out}}} \frac{ f_{\ell_{1} \ell_{2}} }{T_{\ell_{2}}^{\textrm{out}}} \right),
\end{align*}
and by iteration we obtain the stated formula for $(\mathbf{G}^{m}\vec{z})_i .$  The formula for $(\vec{y}\mathbf{G}^{' m})_i$ is similarly obtained.
\end{proof}

The following lemma makes use of Lemma \ref{lemmaone} to extend the observations of Proposition \ref{propm01}.

\begin{lemma}\label{lemmatwo}
For each integer $m \ge 0$, we have that
$$\sum_{i=1}^n (\mathbf{G}^{m}\vec{z})_{i} = \sum_{i=1}^n (\vec{y}\mathbf{G}^{' m})_{i}.$$
\end{lemma}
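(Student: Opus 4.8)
The plan is to prove the identity by induction on $m$, using Lemma~\ref{lemmaone} to convert the two sums into explicit multi-index sums over flows $f$ and throughflows $T$, and then showing that the two resulting expressions are actually the \emph{same} sum after relabeling indices. The base cases $m=0$ and $m=1$ are handled directly: for $m=0$ we have $\sum_i (\mathbf{I}\vec z)_i = \sum_i z_i = \sum_i y_i = \sum_i (\vec y\,\mathbf{I})_i$ because inputs equal outputs at steady state, and for $m=1$ the claim is precisely Proposition~\ref{propm01}. So the real content is the inductive step for $m \ge 2$.

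For the inductive step, I would first apply Lemma~\ref{lemmaone} to write
\begin{align*}
\sum_{i=1}^n (\mathbf{G}^{m}\vec{z})_{i} &= \sum_{i,\ell_1,\dots,\ell_m} \left[ z_{\ell_m}\cdot \frac{f_{i\ell_1}}{T_{\ell_1}^{\textrm{out}}} \cdot \prod_{n=1}^{m-1} \frac{f_{\ell_n \ell_{n+1}}}{T_{\ell_{n+1}}^{\textrm{out}}} \right].
\end{align*}
Summing over the free index $i$ first gives $\sum_i f_{i\ell_1} = T_{\ell_1}^{\textrm{out}} - y_{\ell_1}$ by equation~(2), so the leading factor $\tfrac{1}{T_{\ell_1}^{\textrm{out}}}\sum_i f_{i\ell_1}$ collapses to $1 - y_{\ell_1}/T_{\ell_1}^{\textrm{out}}$. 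This splits the sum into a ``main'' piece (the $1$ term) and a ``correction'' piece (the $-y_{\ell_1}/T_{\ell_1}^{\textrm{out}}$ term). The correction piece, after reindexing, is exactly $\sum_i (\vec y\,\mathbf{G}^{\prime\, m})_i$ expanded via Lemma~\ref{lemmaone} — the outgoing chain $f_{\ell_1\ell_2}f_{\ell_2\ell_3}\cdots$ read in reverse is an incoming chain, the $y_{\ell_1}$ sits at the head, and the $T^{\textrm{out}}$ denominators match the $T^{\textrm{in}}$ denominators since $T_\ell^{\textrm{in}} = T_\ell^{\textrm{out}}$ at steady state. Meanwhile the ``main'' piece is $\sum_{\ell_1,\dots,\ell_m} z_{\ell_m}\prod_{n=1}^{m-1} \tfrac{f_{\ell_n\ell_{n+1}}}{T_{\ell_{n+1}}^{\textrm{out}}}$, and summing over $\ell_1$ using equation~(1) in the form $\sum_{\ell_1} f_{\ell_1\ell_2} = T_{\ell_2}^{\textrm{in}} - z_{\ell_2} = T_{\ell_2}^{\textrm{out}} - z_{\ell_2}$ again splits it; one of these pieces telescopes down to the length-$(m-1)$ quantity $\sum_i (\mathbf{G}^{m-1}\vec z)_i$ and the other terms cancel against analogous pieces coming from the input side. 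Carefully bookkeeping which sub-sum equals which, the net relation will be $\sum_i(\mathbf G^m\vec z)_i - \sum_i(\vec y\,\mathbf G^{\prime m})_i = \sum_i(\mathbf G^{m-1}\vec z)_i - \sum_i(\vec y\,\mathbf G^{\prime (m-1)})_i$, and the induction hypothesis closes the argument. Alternatively — and this is probably cleaner — one can avoid the telescoping entirely: just show directly that the full multi-index sum for $\sum_i(\mathbf G^m\vec z)_i$ equals the full multi-index sum for $\sum_i(\vec y\,\mathbf G^{\prime m})_i$ by the substitution $i \leftrightarrow \ell_m$, $\ell_k \leftrightarrow \ell_{m-k}$ together with the steady-state denominator identity $T^{\textrm{in}} = T^{\textrm{out}}$ and equations (1)–(2) used to rewrite the two ``boundary'' factors ($z$ at one end of the $\mathbf G^m\vec z$ chain, $y$ at the end of the $\vec y\,\mathbf G^{\prime m}$ chain).

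The main obstacle I anticipate is the index bookkeeping: the formula in Lemma~\ref{lemmaone} has $m$ denominators but only $m$ summation indices $\ell_1,\dots,\ell_m$ plus the free index $i$, and the two chains run in opposite orientations, so one must be meticulous about which $T_{\ell}$ normalizes which $f$ and about whether the boundary term attaches to the ``source'' or ``sink'' end of each path. Once the correspondence of paths of length $m$ from the output expansion to paths of length $m$ in the input expansion (a reversal-of-path bijection) is set up correctly, the equality reduces to the pointwise identities $T_{\ell}^{\textrm{in}} = T_{\ell}^{\textrm{out}}$ (steady state) and the boundary identities $\sum_i f_{i\ell} = T_\ell^{\textrm{out}} - y_\ell$, $\sum_j f_{\ell j} = T_\ell^{\textrm{in}} - z_\ell$ from equations (1) and (2), which is precisely the machinery already exhibited in the proof of Proposition~\ref{propm01}. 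I would present the direct path-reversal version rather than the telescoping recursion, since it makes the steady-state hypothesis do its work transparently and keeps the notation from ballooning.
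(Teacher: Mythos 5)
Your overall plan---induction on $m$ with base cases from the steady-state identity and Proposition~\ref{propm01}, plus the expansions of Lemma~\ref{lemmaone}---is sensible, and the recursion you write down at the end, $\sum_i(\mathbf G^m\vec z)_i - \sum_i(\vec y\,\mathbf G^{\prime m})_i = \sum_i(\mathbf G^{m-1}\vec z)_i - \sum_i(\vec y\,\mathbf G^{\prime (m-1)})_i$, is in fact true and would close the induction. But the route you describe to reach it has a genuine error at its central step. After summing out $i$ via $\sum_i f_{i\ell_1}=T_{\ell_1}^{\textrm{out}}-y_{\ell_1}$, the ``correction'' piece is
$$-\sum_{\ell_1,\dots,\ell_m} z_{\ell_m}\,\frac{y_{\ell_1}}{T_{\ell_1}^{\textrm{out}}}\prod_{n=1}^{m-1}\frac{f_{\ell_n \ell_{n+1}}}{T_{\ell_{n+1}}^{\textrm{out}}},$$
which still contains $z_{\ell_m}$ and has only $m-1$ flow factors; it is \emph{not} $-\sum_i(\vec y\,\mathbf G^{\prime m})_i$, which has $m$ flow factors and no $z$. (Your two claims are also mutually inconsistent: if the correction piece really were $\sum_i(\vec y\,\mathbf G^{\prime m})_i$, you would obtain $\sum_i(\mathbf G^m\vec z)_i=\sum_i(\mathbf G^{m-1}\vec z)_i-\sum_i(\vec y\,\mathbf G^{\prime m})_i$, not the telescoping relation you state.) Moreover the ``main'' piece is already exactly $\sum_{\ell_1}(\mathbf G^{m-1}\vec z)_{\ell_1}$ and needs no further splitting. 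What is actually required is the \emph{symmetric} split on the input side, $\sum_i(\vec y\,\mathbf G^{\prime m})_i=\sum_i(\vec y\,\mathbf G^{\prime (m-1)})_i-\sum_{\ell}(\vec y\,\mathbf G^{\prime (m-1)})_\ell\, z_\ell/T_\ell^{\textrm{in}}$ from $\sum_j f_{\ell j}=T_\ell^{\textrm{in}}-z_\ell$, followed by a proof that the two correction terms coincide --- which holds because at steady state $\mathbf G'=\diag(\vec T)^{-1}\mathbf F$ and $\mathbf G=\mathbf F\diag(\vec T)^{-1}$, so $\diag(\vec T)^{-1}\mathbf G^{m-1}=\mathbf G^{\prime (m-1)}\diag(\vec T)^{-1}$. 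Without that matching step the induction does not close.

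Your ``cleaner'' alternative also fails as stated: there is no term-by-term reversal-of-path bijection. For a fixed path $v_0\to v_1\to\cdots\to v_m$, the output-oriented summand carries the boundary factor $z_{v_0}/T_{v_0}$ while the corresponding input-oriented summand carries $y_{v_m}/T_{v_m}$, and these are not pointwise equal; the free boundary index must be summed out and equations (1)--(2) applied, which reintroduces precisely the multi-piece bookkeeping you hoped to avoid. That is in fact how the paper argues, though without induction: for each $m\ge 2$ it substitutes $z_{\ell_m}=T_{\ell_m}^{\textrm{out}}-\sum_{\ell_{m+1}}f_{\ell_m\ell_{m+1}}$ to lengthen the chain at the $z$ end, substitutes $\sum_i f_{i\ell_1}=T_{\ell_1}^{\textrm{out}}-y_{\ell_1}$ to produce the $y$ factor at the other end, and cancels the two cross terms by the cyclic reindexing identity (equation~(\ref{initialobservation})). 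Either the corrected telescoping recursion or the paper's direct cancellation works, but as written your sketch misidentifies the key sub-sums.
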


\begin{proof}


The case $m=0$, namely that $\sum_{i} z_{i} = \sum_{i} y_{i}$, follows directly from the steady-state assumption, and
the case $m=1$ was observed in Proposition \ref{propm01}.  Suppose then that $m \ge 2$.

Observe that we have the following identity by re-indexing the left-hand side via $\ell_{t} \mapsto \ell_{t+1}$ for $1 \le t \le m-1$, $\ell_{m} \mapsto \ell_{1}$, followed by renaming $\ell_{1}$ to $\ell_{m+1}$ and then $i$ to $\ell_{1}$:
 \begin{align}
\sum_{i, \ell_{1},\dots,\ell_{m}} \left[ f_{i \ell_{1}} \cdot \prod_{n=1}^{m-1} \left( \frac{ f_{\ell_{n} \ell_{n+1}}}{ T_{\ell_{n}}^{\textrm{out}}} \right)    \right] &=
 \sum_{\ell_{1}, \ldots, \ell_{m+1}}  \left[ f_{\ell_{1} \ell_{2}} \prod_{n=2}^{m} \left( \frac{ f_{\ell_{n} \ell_{n+1}}}{ T_{\ell_{n}}^{\textrm{out}}} \right) \right] \label{initialobservation}
\end{align}

From Lemma \ref{lemmaone} it follows that
\begin{align*}
\sum_{i} (\mathbf{G}^{m} \vec{z})_{i} &= \sum_{i, \ell_{1}, \ldots, \ell_{m}} \left[ z_{\ell_{m}} \cdot \frac{  f_{i \ell_{1}}}{ T_{\ell_{1}}^{\textrm{out}} } \cdot \prod_{n=1}^{m-1} \left( \frac{ f_{\ell_{n} \ell_{n+1}}}{T_{\ell_{n+1}}^{\textrm{out}}} \right) \right] \\
&=  \sum_{i, \ell_{1},\dots,\ell_{m}} \left[ \left( T_{\ell_{m}}^{\textrm{out}} - \sum_{\ell_{m+1}} f_{\ell_{m} \ell_{m+1}} \right) \cdot \frac{  f_{i \ell_{1}}}{ T_{\ell_{1}}^{\textrm{out}} } \cdot \prod_{n=1}^{m-1} \left( \frac{ f_{\ell_{n} \ell_{n+1}}}{T_{\ell_{n+1}}^{\textrm{out}}} \right) \right] (\textrm{recall that } \vec{T}^{\textrm{in}} = \vec{T}^{\textrm{out}}) \\
&= \sum_{i, \ell_{1},\dots,\ell_{m}} \left[ f_{i \ell_{1}} \cdot \prod_{n=1}^{m-1} \left( \frac{ f_{\ell_{n} \ell_{n+1}}}{T_{\ell_{n}}^{\textrm{out}}} \right)    \right] - \sum_{i, \ell_{1}, \ldots, \ell_{m+1}}  \left[ f_{i \ell_{1}} \prod_{n=1}^{m} \left( \frac{ f_{\ell_{n} \ell_{n+1}}}{T_{\ell_{n}}^{\textrm{out}}} \right) \right] \\
&=  \sum_{i, \ell_{1},\dots,\ell_{m}} \left[ f_{i \ell_{1}} \cdot \prod_{n=1}^{m-1} \left( \frac{ f_{\ell_{n} \ell_{n+1}}}{T_{\ell_{n}}^{\textrm{out}}} \right)    \right] - \sum_{\ell_{1}, \ldots, \ell_{m+1}}  \left[ \left( \sum_{i} f_{i \ell_{1}} \right) \prod_{n=1}^{m} \left( \frac{ f_{\ell_{n} \ell_{n+1}}}{T_{\ell_{n}}^{\textrm{out}}} \right) \right] \\
&=  \sum_{i, \ell_{1},\dots,\ell_{m}} \left[ f_{i \ell_{1}} \cdot \prod_{n=1}^{m-1} \left( \frac{ f_{\ell_{n} \ell_{n+1}}}{T_{\ell_{n}}^{\textrm{out}}} \right)    \right] - \sum_{\ell_{1}, \ldots, \ell_{m+1}}  \left[ \left(T_{\ell_{1}}^{\textrm{out}} - y_{\ell_{1}} \right) \prod_{n=1}^{m} \left( \frac{ f_{\ell_{n} \ell_{n+1}}}{T_{\ell_{n}}^{\textrm{out}}} \right) \right] \\
&=  \sum_{i, \ell_{1},\dots,\ell_{m}} \left[ f_{i \ell_{1}} \cdot \prod_{n=1}^{m-1} \left( \frac{ f_{\ell_{n} \ell_{n+1}}}{T_{\ell_{n}}^{\textrm{out}}} \right)    \right] -
 \sum_{\ell_{1}, \ldots, \ell_{m+1}}  \left[ f_{\ell_{1} \ell_{2}} \prod_{n=2}^{m} \left( \frac{ f_{\ell_{n} \ell_{n+1}}}{T_{\ell_{n}}^{\textrm{out}}} \right) \right] +
 \sum_{\ell_{1}, \ldots, \ell_{m+1}}  \left[ y_{\ell_{1}} \cdot \prod_{n=1}^{m} \left( \frac{ f_{\ell_{n} \ell_{n+1}}}{T_{\ell_{n}}^{\textrm{out}}} \right) \right] \\
&= \sum_{\ell_{1}, \ldots, \ell_{m+1}}  \left[ y_{\ell_{1}} \cdot \prod_{n=1}^{m} \left( \frac{ f_{\ell_{n} \ell_{n+1}}}{T_{\ell_{n}}^{\textrm{out}}} \right) \right] (\textrm{by equation (\ref{initialobservation})}) \\
&=  \sum_{i, \ell_{1},\dots,\ell_{m}} \left[ y_{\ell_{m}} \cdot \frac{ f_{\ell_{1} i} }{T_{\ell_{1}}^{\textrm{in}}} \cdot \prod_{n=1}^{m-1} \left( \frac{ f_{\ell_{n+1} \ell_{n}}}{T_{\ell_{n+1}}^{\textrm{in}}} \right) \right]  (\textrm{relabelling } \ell_{m+1} \textrm{ as } i) \\
&= \sum_{i} ( \vec{y} \mathbf{G}^{' m})_{i} .
\end{align*}

\end{proof}

\begin{theorem}\label{thetheorem}
Realized input and output \ID\ are identical when the observed system is at steady state.
\end{theorem}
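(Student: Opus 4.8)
The plan is to reduce the identity of the two realized ratios to the term-by-term matching provided by Lemma \ref{lemmatwo}, via the Neumann-series expansions of the integral flow matrices. First I would recall that, because the power series in equations (\ref{eq:Ga}) and (\ref{eq:Gb}) converge under the modeling hypotheses, we may write $\mathbf{N} - \mathbf{I} - \mathbf{G} = \sum_{m=2}^{\infty} \mathbf{G}^{m}$ and $\mathbf{N}' - \mathbf{I} - \mathbf{G}' = \sum_{m=2}^{\infty} \mathbf{G}^{' m}$. Applying these to the observed boundary vectors and summing over the node index yields
\begin{align*}
\sum_{i=1}^{n} \bigl( (\mathbf{N} - \mathbf{I} - \mathbf{G})\vec{z} \bigr)_{i} &= \sum_{m=2}^{\infty} \sum_{i=1}^{n} (\mathbf{G}^{m}\vec{z})_{i}, &
\sum_{i=1}^{n} \bigl( \vec{y}(\mathbf{N}' - \mathbf{I} - \mathbf{G}') \bigr)_{i} &= \sum_{m=2}^{\infty} \sum_{i=1}^{n} (\vec{y}\mathbf{G}^{' m})_{i},
\end{align*}
with the exchange of the two sums justified because every term is nonnegative (the entries of $\mathbf{G}$, $\mathbf{G}'$ and the boundary vectors are nonnegative and the series converges).

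Next I would invoke Lemma \ref{lemmatwo}, which asserts precisely that $\sum_{i} (\mathbf{G}^{m}\vec{z})_{i} = \sum_{i} (\vec{y}\mathbf{G}^{' m})_{i}$ for every integer $m \ge 0$. Summing this over $m \ge 2$ shows the numerators of $\ID_{\textrm{realized, output}}$ and $\ID_{\textrm{realized, input}}$ agree, while the case $m = 1$ (equivalently Proposition \ref{propm01}) shows the denominators $\sum_{i}(\mathbf{G}\vec{z})_{i}$ and $\sum_{i}(\vec{y}\mathbf{G}')_{i}$ agree. Since the denominator is strictly positive — an ecosystem model at steady state has at least one boundary input and at least one internal flow leaving a node that receives input, so $\sum_{i}(\mathbf{G}\vec{z})_{i} > 0$ — both ratios are well defined, and having equal numerators and equal denominators they are equal. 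This is the assertion of Theorem \ref{thetheorem}.

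The substantive content has already been absorbed into Lemma \ref{lemmatwo}, where the steady-state hypothesis $\vec{T}^{\textrm{in}} = \vec{T}^{\textrm{out}}$ is used to telescope the length-$m$ path sums of the output expansion into those of the input expansion; consequently I expect the proof of the theorem itself to be quite short. The only genuine, and purely routine, obstacle is the justification of interchanging the infinite sum over path length $m$ with the finite sum over the node index $i$; the positivity of the denominator is immediate from the modeling assumptions but is worth stating explicitly so that the quotient is meaningful.
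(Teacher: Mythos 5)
Your proposal is correct and follows essentially the same route as the paper: expand $\mathbf{N}-\mathbf{I}-\mathbf{G}$ and $\mathbf{N'}-\mathbf{I}-\mathbf{G'}$ as the $m\ge 2$ tail of the Neumann series and apply Lemma \ref{lemmatwo} term by term for the numerators and at $m=1$ for the denominators. The paper states this in one sentence; your added remarks on interchanging the sums and on the positivity of the denominator are sound but routine elaborations of the same argument.
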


\begin{proof}
  Lemma \ref{lemmatwo} implies the equality of the denominators of
  realized input and output \ID\ as well as equality of corresponding
  summands of the numerators of realized input and output \ID\ .
\end{proof}



\clearpage

\begin{table*}
\caption{Fifty trophically-based ecosystem network
  models.} \label{tab:models}
\tableline 
\begin{center}
\begin{footnotesize}
\begin{tabular}{l l c c r c r}
Model & units & $n^\dagger$ & $C^\dagger$ & $TST^\dagger$ &
$FCI^\dagger$ & Source \\
\hline \\[-1.5ex]
Lake Findley & gC m$^{-2}$ yr$^{-1}$ & 4 & 0.38 & 51 & 0.30 & \citet{richey78} \\
Mirror Lake & gC m$^{-2}$ yr$^{-1}$ & 5 & 0.36 & 218 & 0.32 &  \citet{richey78} \\
Lake Wingra & gC m$^{-2}$ yr$^{-1}$ & 5 & 0.40 & 1,517 & 0.40 & \citet{richey78} \\
Marion Lake & gC m$^{-2}$ yr$^{-1}$ & 5 & 0.36 & 243 & 0.31 & \citet{richey78} \\
Cone Springs & kcal m$^{-2}$ yr$^{-1}$ & 5 & 0.32 & 30,626 & 0.09 & \citet{tilly68} \\
Silver Springs & kcal m$^{-2}$ yr$^{-1}$ & 5 & 0.28 & 29,175 & 0.00 & \citet{odum57} \\
English Channel & kcal m$^{-2}$ yr$^{-1}$ & 6 & 0.25 & 2,280 & 0.00 & \citet{brylinsky72} \\
Oyster Reef & kcal m$^{-2}$ yr$^{-1}$ & 6 & 0.33 & 84 & 0.11 & \citet{dame81} \\
Somme Estuary & mgC m$^{-2}$ d$^{-1}$ & 9 & 0.30 & 2,035 & 0.14 & \citet{rybarczyk03} \\
Bothnian Bay & gC m$^{-2}$ yr$^{-1}$ & 12 & 0.22 & 130 & 0.18 &  \citet{sandberg00} \\
Bothnian Sea & gC m$^{-2}$ yr$^{-1}$ & 12 & 0.24 & 458 & 0.27 &  \citet{sandberg00} \\
Ythan Estuary & gC m$^{-2}$ yr$^{-1}$ & 13 & 0.23 & 4,181 & 0.24 & \citet{baird81} \\
Baltic Sea & mgC m$^{-2}$ d$^{-1}$ & 15 & 0.17 & 1,974 & 0.13 &  \citet{baird91} \\
Ems Estuary & mgC m$^{-2}$ d$^{-1}$ & 15 & 0.19 & 1,019 & 0.32 & \citet{baird91} \\
Swarkops Estuary & mgC m$^{-2}$ d$^{-1}$ & 15 & 0.17 & 13,996 & 0.47 &  \citet{baird91} \\
Southern Benguela Upwelling & mgC m$^{-2}$ d$^{-1}$ & 16 & 0.23 & 1,774 & 0.19 &\citet{baird91} \\
Peruvian Upwelling & mgC m$^{-2}$ d$^{-1}$ & 16 & 0.22 & 33,496 & 0.04 & \citet{baird91} \\
Crystal River (control) & mgC m$^{-2}$ d$^{-1}$ & 21 & 0.19 & 15,063 & 0.07 & \citet{ulanowicz86} \\
Crystal River (thermal) & mgC m$^{-2}$ d$^{-1}$ & 21 & 0.14 & 12,032 & 0.09 & \citet{ulanowicz86} \\
Charca de Maspalomas Lagoon & mgC m$^{-2}$ d$^{-1}$ & 21 & 0.13 & 6,010,331 & 0.18 & \citet{almunia99} \\
Northern Benguela Upwelling & mgC m$^{-2}$ d$^{-1}$ & 24 & 0.21 & 6,608 &0.05 & \citet{heymans00} \\
Neuse Estuary (early summer 1997) & mgC m$^{-2}$ d$^{-1}$ & 30 & 0.09 & 13,826 & 0.12 & \citet{baird04} \\
Neuse Estuary (late summer 1997) & mgC m$^{-2}$ d$^{-1}$ & 30 & 0.11 & 13,038 & 0.13 & \citet{baird04} \\
Neuse Estuary (early summer 1998) & mgC m$^{-2}$ d$^{-1}$ & 30 & 0.09 & 14,025 & 0.12 & \citet{baird04} \\
Neuse Estuary (late summer 1998) & mgC m$^{-2}$ d$^{-1}$ & 30 & 0.10 & 15,031 & 0.11 & \citet{baird04} \\
Gulf of Maine & g ww m$^{-2}$ yr$^{-1}$ & 31 & 0.35 & 18,382 & 0.15 &  \citet{link08} \\
Georges Bank & g ww m$^{-2}$ yr$^{-1}$ & 31 & 0.35 & 16,890 & 0.18 & \citet{link08} \\
Middle Atlantic Bight & g ww m$^{-2}$ yr$^{-1}$ & 32 & 0.37 & 17,917 & 0.18 & \citet{link08} \\
Narragansett Bay & mgC m$^{-2}$ yr$^{-1}$ & 32 & 0.15 & 3,917,246 & 0.51  & \citet{monaco97} \\
Southern New England Bight & g ww m$^{-2}$ yr$^{-1}$ & 33 & 0.03 & 17,597 & 0.16 & \citet{link08} \\
Chesapeake Bay  & mgC m$^{-2}$ yr$^{-1}$ & 36 & 0.09 & 3,227,453 & 0.19 & \citet{baird89} \\
St. Marks Seagrass, site 1 (Jan) & mgC m$^{-2}$ d$^{-1}$ & 51 & 0.08 & 1,316 & 0.13 & \citet{baird98} \\
St. Marks Seagrass, site 1 (Feb) & mgC m$^{-2}$ d$^{-1}$ & 51 & 0.08 & 1,591 & 0.11 & \citet{baird98} \\
St. Marks Seagrass, site 2 (Jan) & mgC m$^{-2}$ d$^{-1}$ & 51 & 0.07 & 1,383 & 0.09 & \citet{baird98} \\
St. Marks Seagrass, site 2 (Feb) & mgC m$^{-2}$ d$^{-1}$ & 51 & 0.08 & 1,921 & 0.08 & \citet{baird98} \\
St. Marks Seagrass, site 3 (Jan) & mgC m$^{-2}$ d$^{-1}$ & 51 & 0.05 & 12,651 & 0.01 &\citet{baird98} \\
St. Marks Seagrass, site 4 (Feb) & mgC m$^{-2}$ d$^{-1}$ & 51 & 0.08 & 2,865 & 0.04 & \citet{baird98} \\
Sylt R{\o}m{\o} Bight & mgC m$^{-2}$ d$^{-1}$ & 59 & 0.08 & 1,353,406 & 0.09 & \citet{baird04_sylt} \\
Graminoids (wet) & gC m$^{-2}$ yr$^{-1}$ & 66 & 0.18 & 13,677 & 0.02 & \citet{ulanowicz00_graminoids} \\
Graminoids (dry) & gC m$^{-2}$ yr$^{-1}$ & 66 & 0.18 & 7,520 & 0.04 &  \citet{ulanowicz00_graminoids} \\
Cypress (wet) & gC m$^{-2}$ yr$^{-1}$ & 68 & 0.12 & 2,572 & 0.04 & \citet{ulanowicz97_cypress} \\
Cypress (dry) & gC m$^{-2}$ yr$^{-1}$ & 68 & 0.12 & 1,918 & 0.04 & \citet{ulanowicz97_cypress}\\
Lake Oneida (pre-ZM) & gC m$^{-2}$ yr$^{-1}$ & 74 & 0.22 & 1,638 & $<0.01$ & \citet{miehls09_oneida} \\
Lake Quinte (pre-ZM) & gC m$^{-2}$ yr$^{-1}$ & 74 & 0.21 & 1,467 & $<0.01$ &  \citet{miehls09_quinte} \\
Lake Oneida (post-ZM) & gC m$^{-2}$ yr$^{-1}$ & 76 & 0.22 & 1,365 & $<0.01$ & \citet{miehls09_oneida} \\
Lake Quinte (post-ZM) & gC m$^{-2}$ yr$^{-1}$ & 80 & 0.21 & 1,925 & $0.01$ &  \citet{miehls09_quinte} \\
Mangroves (wet) & gC m$^{-2}$ yr$^{-1}$ & 94 & 0.15 & 3,272 & 0.10 & \citet{ulanowicz99_mangrove} \\
Mangroves (dry) & gC m$^{-2}$ yr$^{-1}$ & 94 & 0.15 & 3,266 & 0.10 & \citet{ulanowicz99_mangrove} \\
Florida Bay (wet) & mgC m$^{-2}$ yr$^{-1}$ & 125 & 0.12 & 2,721 & 0.14 & \citet{ulanowicz98_fb} \\
Florida Bay (dry) & mgC m$^{-2}$ yr$^{-1}$ & 125 & 0.13 & 1,779 & 0.08& \citet{ulanowicz98_fb} \\[0.5ex] 
\end{tabular}
\end{footnotesize}
\end{center}
\tableline
\begin{scriptsize}
$^\dagger$ $n$ is the number of nodes in the network model, $C=L/n^2$
is the model connectance when $L$ is the number of direct links or
energy--matter transfers, $TST=\sum\sum{f_{ij}}+\sum{z_i}$ is the total
system throughflow, and $FCI$ is the Finn Cycling Index \citep{finn80}.  
\end{scriptsize}
\end{table*}


\begin{figure}[t]
 \center
 \includegraphics[scale=1]{}
 \caption{Relationship between input and output oriented measures of
   the ratio of indirect-to-direct effects calculated using the (a)
   unit and (b) realized metrics.} \label{fig:ide}
\end{figure}

\end{document}